\newtheorem{theorem}{Theorem}
\newtheorem{lemma}{Lemma}
\newtheorem{proof}{Proof}
\begin{document}
%
\title{Quantization Errors of fGn and fBm Signals}

\author{Zhiheng Li, Li Li\footnote{Corresponding author:
li-li@mail.tsinghua.edu.cn}, Yudong Chen, Yi Zhang\\
\\
Department of Automation, Tsinghua University, Beijing, China
100084}

\maketitle

\begin{abstract}
In this Letter, we show that under the assumption of high
resolution, the quantization errors of fGn and fBm signals with
uniform quantizer can be treated as uncorrelated white noises.
\end{abstract}

\section{Introduction}
\label{sec:1}

Fractional Gaussian noise (fGn) and fractional Brownian motion (fBm)
provide convenient ways to describe stochastic processes with
long-range dependencies. Thus, they have received continuing
interests in various fields and have many applications, e.g.
modeling the communication networks flow and economic times
series \cite{MandelbrotvanNess1968}-
\cite{Mishura2008}.

Of particular interest is the estimation of the Hurst exponent $H$
of a fGn or fBm process. In practice, such estimations are usually
done on the sampled and quantized time series. For example, texture
images are often viewed as 2D fBm signals uniformly quantized to the
$0-255$ scale \cite{ChenKellerCrownover1993},
\cite{Pesquet-PopescuVehel2002}. As will be shown later, the
quantization error might significantly affect the estimation result.

To the best of our knowledge, no reports discussed the effect of
quantization errors of fGn and fBm processes. In this Letter, we
will show that under the assumption of high resolution, the
quantization error can be viewed as a white noise added to the
sampled fGn or fBm signal.

\section{The Discrete-Time fGn and fBm Signals}
\label{sec:2}

There exist different kinds of discrete-time approximation for the
continuous-time fGn and fBm processes, e.g.
\cite{MandelbrotWallis1969}-
\cite{LeeZhaoNarasimhaRao2003}. In this Letter, we will use the
two-step discrete-time approximation signals defined in
\cite{Meade1993}:

1) First, the standard discrete-time fGn process $W^H (n)$ with
Hurst exponent $H \in (0,1)$ is defined as a weighted sequence of a
standard Gaussian white noise $W(n)$
\begin{equation}
\label{equ:1} W^H (n) = \left\{ \begin{array}{ll}
\sum_{i = 0}^{n} h_{n-i}^{H-\frac{1}{2}} W(i) & \textrm{for } H \ne \frac{1}{2} \\
W(n) & \textrm{for } H = \frac{1}{2}
\end{array} \right.
\end{equation}

\noindent where $W(n) \sim N(0, 1)$ and $h_n^{H-\frac{1}{2}} =
\frac{\Gamma(n + H - \frac{1}{2})}{\Gamma(H - \frac{1}{2})\Gamma(n +
1)}$, $n \in \mathbb{N} \cup \{0\}$.

2) Second, the discrete-time fBm process $B^H (n)$ is represented as
the running sum of $W^H(n)$
\begin{equation}
\label{equ:2} B^H (n) = \sum_{i=0}^{n} W^H(i) = \sum_{i=0}^{n}
\sum_{j=0}^{i} h_{i-j}^{H-\frac{1}{2}} W(j)
\end{equation}

As pointed out in \cite{GrangerJoyeaux1980}, \cite{Hosking1981},
Eq.(\ref{equ:1})-(\ref{equ:2}) are indeed ARFIMA processes, which
can be rewritten in terms of the lag operator $L$:
\begin{equation}
\label{equ:3} W^H (n) = \left\{ \begin{array}{ll}
(1-L)^{\frac{1}{2}-H} W(n) & \textrm{for } H \ne \frac{1}{2} \\
W(n) & \textrm{for } H = \frac{1}{2}
\end{array} \right.
\end{equation}

and
\begin{equation}
\label{equ:4} (1-L) B^H (n) = B^H (n)-B^H (n-1) = W^H(n)
\end{equation}

\noindent where $(1-L)^d = \sum_{k=0}^{\infty} \frac{\Gamma(k -
d)}{\Gamma(-d)\Gamma(k + 1)} L^k$. The truncated formulas in
(\ref{equ:1})-(\ref{equ:4}) are equivalent to infinite formulas with
$W(i) = 0$ for $i \le 0$; they are used here since we usually
consider the case with $W^H(i) = 0$ for $i
\le 0$ \cite{MandelbrotvanNess1968}-
\cite{Mishura2008}.

For clarity, we focus on the above standard discrete-time fBm
process but the conclusions can be easily extended to general cases.

\section{The High-resolution Quantization Errors of the fGn and fBm Signals}
\label{sec:3}

The use of high resolution theory for error process analysis can
date back to late 1940s \cite{ClavierPanterGrieg1947},
\cite{Bennett1948}. In \cite{Bennett1948}, Bennett demonstrated that
under the assumption of high resolution and smooth density of the
sampled random process, the quantization error behaves like an
additive white noise. In other words, the quantization error has
small correlation with the signal and an approximately white
spectrum; see also the good surveys in \cite{Gray1990}-
\cite{WidrowKollar2008}. In the sequel, we will show that this
conclusion also holds for fGn and fBm signals. Our proof mainly uses
the results in \cite{ChouGray1992}.

Suppose the original discrete-time fGn or fBm sequence $S^H(n)$ is
bounded within $[-b, b]$ in a finite time horizon $[0, t]$ and an
$M$-level uniform quantizer in $[-b, b]$ is applied. We also assume
the sample rate and the resolution of the quantizer are high enough.

As shown in \cite{ChouGray1992}, by defining $\Delta =
\frac{2b}{M-1}$, the normalized quantization noise $e(n)$ of
$S^H(n)$ can be represented as the normalized quantization noise of
the sigma-delta modulator for $S^H(n)$:
\begin{equation}
\label{equ:5} e(n) \triangleq \frac{1}{2} - \langle \sum_{i=0}^{n}
\left( \frac{(M-1) S^H(n)}{\Delta} + \frac{1}{2} \right) \rangle
\end{equation}

\noindent where $\langle z \rangle = x$ mod $1$ is the fractional
part of $x$.

In the following subsections, we will discuss the quantization
errors of the fGn and fBm signals, respectively.

\subsection{fGn Signals}
\label{sec:3.1}

For the sigma-delta modulator, we have the following useful lemmas.

\indent

\begin{lemma} \cite{ChouGray1992} Define an casual stable MA process $x(n)$
\begin{equation}
\label{equ:6} x(n) = \psi(L) z(n) = \sum_{i=0}^{n} \psi_i z(n-i)
\end{equation}

\noindent where $z(n)$ is an i.i.d process having a certain
distribution with smooth density. If the regression coefficients
$\psi_i$ of this MA process satisfy that there exist $\eta > 0$ and
infinitely many values of $r$ such that $\left \vert \psi_0 + ... +
\psi_r \right \vert > \eta$, then the distribution of the normalized
quantization error $e(n)$ under modulo sigma-delta modulation
converges to the uniform distribution in $[-\frac{1}{2},
\frac{1}{2}]$ under the assumption of high resolution.
\footnote{\textit{Lemma 1} is actually a slightly modified version
of \textit{Property 3} in \cite{ChouGray1992}, but it is not
difficult to see that the proof in \cite{ChouGray1992} can be
applied to \textit{Lemma 1} with little modification.}
\end{lemma}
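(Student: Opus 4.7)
The plan is to adapt the characteristic-function argument from \cite{ChouGray1992}. By the standard Fourier characterisation of the uniform law on $[-\frac{1}{2},\frac{1}{2}]$, it suffices to show that for every fixed $k\in\mathbb{Z}\setminus\{0\}$,
\begin{equation}
\mathbb{E}[e^{i 2\pi k e(n)}]\longrightarrow 0
\end{equation}
as the resolution $\alpha\triangleq (M-1)/\Delta$ tends to infinity. Because $e^{i 2\pi k\langle y\rangle}=e^{i 2\pi k y}$ for every integer $k$, the fractional-part operator in (\ref{equ:5}) disappears inside the exponential, and the deterministic constants $\frac{1}{2}$ contribute only a unit-modulus factor. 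The task therefore reduces to proving
\begin{equation}
|\mathbb{E}[e^{i 2\pi k\alpha T_n}]|\longrightarrow 0,\qquad T_n\triangleq\sum_{i=0}^{n} x(i).
\end{equation}

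Next I would expose the i.i.d.\ structure of $T_n$. Substituting the MA representation (\ref{equ:6}) into the definition of $T_n$ and interchanging the summations yields
\begin{equation}
T_n=\sum_{j=0}^{n}\Psi_{n-j}\,z(j),\qquad \Psi_r\triangleq\sum_{i=0}^{r}\psi_i,
\end{equation}
and the independence of the $z(j)$'s factorises the characteristic function:
\begin{equation}
\mathbb{E}[e^{i 2\pi k\alpha T_n}]=\prod_{j=0}^{n}\phi_z(2\pi k\alpha\Psi_{n-j}),
\end{equation}
where $\phi_z$ is the common characteristic function of the $z(j)$.

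The conclusion then follows by combining smoothness of the density with the hypothesis on the partial sums $\Psi_r$. Smoothness of $z$'s density forces $|\phi_z(t)|\to 0$ as $|t|\to\infty$ (Riemann--Lebesgue), while $|\phi_z|\le 1$ everywhere. By assumption there exists an index $r_0$ with $|\Psi_{r_0}|>\eta$; for every $n\ge r_0$ the factor indexed by $j=n-r_0$ has argument of modulus at least $2\pi|k|\alpha\eta$, which is driven to infinity by the high-resolution limit, so that single factor sends the whole product to $0$ while the remaining factors stay bounded by $1$. This yields the desired convergence to the uniform distribution.

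I expect the principal subtlety to lie in how the two asymptotic parameters (time index $n$ and resolution $\alpha$) are coupled: the hypothesis of ``infinitely many $r$ with $|\Psi_r|>\eta$'' is deliberately stronger than what the single-factor argument just given uses, and is what allows \cite{ChouGray1992} to make the estimate uniform over growing time horizons when $n\to\infty$ is taken together with $\alpha\to\infty$. Beyond this uniformity point the modifications relative to \cite{ChouGray1992} are essentially cosmetic, so the remainder of the argument is a direct transcription of theirs into the present MA notation.
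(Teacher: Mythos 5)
The paper never actually proves this lemma --- it is imported from \cite{ChouGray1992} with a footnote claiming the original proof carries over --- so the relevant comparison is with the Chou--Gray mechanism and with the parallel argument the paper does write out in part (i) of the proof of Theorem 1. Your setup is the right one: reduce to the Fourier criterion for uniformity, use $e^{i2\pi k\langle y\rangle}=e^{i2\pi ky}$ to strip the fractional part, interchange sums to get $T_n=\sum_{j=0}^{n}\Psi_{n-j}z(j)$, and factor the characteristic function by independence. The gap is in the closing step, and it is a real one: the lemma, as it is invoked throughout the paper (the $\lim_{n\to\infty}$ in Eq.~(\ref{equ:11}), and again in Theorem 2), asserts convergence of the law of $e(n)$ as $n\to\infty$, whereas your single-factor argument fixes $n\ge r_0$ and sends the resolution to infinity. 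As you yourself notice, that argument consumes only the existence of \emph{one} index $r_0$ with $|\Psi_{r_0}|>\eta$ and leaves the ``infinitely many $r$'' hypothesis unused; flagging the $n$--$\alpha$ coupling as a subtlety and deferring back to \cite{ChouGray1992} does not close the argument, because that hypothesis is the engine of the proof rather than a dispensable strengthening.

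The missing step is the following. Since $z$ has a smooth (hence integrable) density, $|\phi_z(t)|<1$ for every $t\ne 0$ and $|\phi_z(t)|\to 0$ as $|t|\to\infty$ by Riemann--Lebesgue, so by continuity $\rho\triangleq\sup_{|t|\ge 2\pi|k|\eta/\Delta}|\phi_z(t)|<1$. Each index $r\le n$ with $|\Psi_r|>\eta$ contributes a factor of modulus at most $\rho$ to $\prod_{j=0}^{n}|\phi_z(2\pi k\Psi_{n-j}/\Delta)|$, while the remaining factors are at most $1$, so the product is bounded by $\rho^{N(n)}$, where $N(n)$ counts such indices; the hypothesis gives $N(n)\to\infty$, hence the product vanishes in the limit $n\to\infty$ for every fixed (sufficiently fine) quantizer and every $k\ne 0$. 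This is exactly the mechanism the paper itself deploys in Theorem 1(i), where the hypothesis of the lemma \emph{fails} ($\Psi_r\to 0$) and the uniform bound $\rho$ must be replaced by the quantitative estimate $\Psi_i\ge 1/\sqrt{i}$ combined with the divergence of the harmonic series. Your proof should be repaired along these lines; as written it establishes a fixed-$n$, infinite-resolution statement that is weaker than, and different from, what the lemma claims and what Theorems 1 and 2 consume.
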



\begin{lemma} \cite{GradshteynRyzhik2007} The following series
${_1F_0}(\alpha, z)$ is a special hypergeometric series
\begin{equation}
\label{equ:7} \sum_{i=0}^{\infty} \frac{\Gamma(\alpha +
i)}{\Gamma(\alpha)\Gamma(i+1)} z^{i} = {_1F_0}(\alpha, z)
\end{equation}
\noindent where $\alpha, z \in \mathbb{C}$.

If $1 \ge \textrm{Re}(\alpha) > 0$, the series converges throughout
the entire unit circle $\left| z \right| = 1$ except for the point
$z = 1$. If $\textrm{Re}(\alpha) < 0$, the series converges
(absolutely) throughout the entire unit circle $\left| z \right| =
1$. Particularly, the special hypergeometric series ${_1F_0}(\alpha,
1)$ converges to $0$, when $\textrm{Re}(\alpha) < 0$.
\end{lemma}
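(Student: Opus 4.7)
The plan is to extract the asymptotic size of the coefficients $a_i = \Gamma(\alpha+i)/[\Gamma(\alpha)\Gamma(i+1)]$ from Stirling's formula and then apply two standard convergence tests on the unit circle: an absolute-convergence argument for $\textrm{Re}(\alpha)<0$ and Dirichlet's test for $0<\textrm{Re}(\alpha)\leq 1$, together with an Abel-theorem argument for the special boundary value.

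Stirling's formula gives $\Gamma(\alpha+i)/\Gamma(i+1)=i^{\alpha-1}(1+O(1/i))$ as $i\to\infty$, hence $|a_i|$ is of order $i^{\textrm{Re}(\alpha)-1}/|\Gamma(\alpha)|$. In the case $\textrm{Re}(\alpha)<0$, the exponent satisfies $\textrm{Re}(\alpha)-1<-1$, so $p$-series comparison yields $\sum_i |a_i|<\infty$; hence the series converges absolutely, and in fact uniformly, on the entire unit circle $|z|=1$.

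For $0<\textrm{Re}(\alpha)\leq 1$ and a fixed $z$ on the unit circle with $z\neq 1$, I would apply Dirichlet's test. The partial sums $\sum_{i=0}^{N}z^{i}=(1-z^{N+1})/(1-z)$ are uniformly bounded by $2/|1-z|$. From the recursion $a_{i+1}/a_i=(\alpha+i)/(i+1)$ one derives $a_{i+1}-a_i=a_i\cdot(\alpha-1)/(i+1)$, which is $O(i^{\textrm{Re}(\alpha)-2})$; for $\textrm{Re}(\alpha)<1$ this exponent is less than $-1$, giving bounded variation $\sum_i |a_{i+1}-a_i|<\infty$, and combined with $a_i\to 0$ (also from the Stirling estimate), Dirichlet's test delivers convergence at each such $z$. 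For the special value ${_1F_0}(\alpha,1)=0$ when $\textrm{Re}(\alpha)<0$, I would recognize the series for $|z|<1$ as the Taylor expansion of $(1-z)^{-\alpha}$ via the binomial theorem. Since it converges absolutely at $z=1$ by the first step, and $(1-z)^{-\alpha}$ is continuous at $z=1$ with value $0^{-\alpha}=0$ (because $\textrm{Re}(-\alpha)>0$), Abel's theorem identifies ${_1F_0}(\alpha,1)=\lim_{z\to 1^{-}}(1-z)^{-\alpha}=0$.

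The main obstacle I anticipate is the Dirichlet step, specifically controlling $|a_{i+1}-a_i|$ uniformly in $i$ from the Stirling remainder in the complex setting so that the bounded-variation bound is unambiguous. The endpoint $\textrm{Re}(\alpha)=1$ is also delicate, since there $|a_i|$ need not vanish and the Dirichlet hypothesis actually fails; strictly, the argument above covers $\textrm{Re}(\alpha)<1$, and dealing with the boundary of the stated range would require either a refined summation-by-parts argument or a mild tightening of the hypothesis.
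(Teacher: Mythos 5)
The paper offers no proof of this lemma at all: it is quoted verbatim as a known result from Gradshteyn--Ryzhik, so there is nothing internal to compare your argument against. Your proposal is a correct, self-contained derivation of what the citation leaves implicit, and the route you take (Stirling asymptotics $a_i\sim i^{\alpha-1}/\Gamma(\alpha)$, $p$-series comparison for $\mathrm{Re}(\alpha)<0$, summation by parts with the bounded-variation form of Dirichlet's test for $z\neq1$ using the exact recursion $a_{i+1}-a_i=a_i(\alpha-1)/(i+1)$, and Abel's theorem applied to the binomial series $(1-z)^{-\alpha}$ for the boundary value) is the standard and appropriate one. What your proof buys over the paper's bare citation is precisely the quantitative control on the coefficients that the paper then reuses informally in Theorem~1 (the bound $\sum_{i=0}^{n}a_i\gtrsim n^{-1/2}$ for $H\in(0,\frac12)$ is exactly the Stirling estimate you derive), so making it explicit is genuinely useful. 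Your closing caveat is also a correct catch rather than a gap in your own argument: at $\mathrm{Re}(\alpha)=1$ the terms $|a_i|$ do not tend to zero (for $\alpha=1$ the series is the geometric series, divergent everywhere on $|z|=1$), so the lemma's range ``$1\ge\mathrm{Re}(\alpha)>0$'' is stated with the non-strict inequality at the wrong end; the usual formulation is $0\le\mathrm{Re}(\alpha)<1$. This does not affect the paper, which only invokes the lemma for $\alpha=H-\frac12\in(-\frac12,\frac12)$, but you are right that your Dirichlet step covers exactly $\mathrm{Re}(\alpha)<1$ and cannot be pushed to the endpoint.
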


\indent

Now we can prove the first main result of this Letter using
\textit{Lemma 1} and \textit{Lemma 2}.

\indent

\begin{theorem} The quantization error of fGn process with uniform
quantizer is asymptotically uniformly distributed in $[-\frac{1}{2},
\frac{1}{2}]$ under the assumption of high resolution.
\end{theorem}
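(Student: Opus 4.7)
The plan is to recast $W^H(n)$ as a causal stable MA process driven by the i.i.d.\ Gaussian noise $W(n)$ and then apply \emph{Lemma 1} to the sigma-delta representation of the quantization error in Eq.~(\ref{equ:5}), with the coefficient hypothesis of \emph{Lemma 1} verified via \emph{Lemma 2}.

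First I would match Eq.~(\ref{equ:1}) with the MA form in Eq.~(\ref{equ:6}) by setting $z(n) = W(n)$ (which has a smooth Gaussian density) and $\psi_i = h_i^{H-1/2}$; the degenerate case $H = \tfrac{1}{2}$ corresponds to $\psi_0 = 1$ and $\psi_i = 0$ for $i \ge 1$. Via Eq.~(\ref{equ:3}) this is the canonical ARFIMA representation $W^H = (1-L)^{1/2-H} W$, so the normalized quantization error in Eq.~(\ref{equ:5}) is exactly the object to which \emph{Lemma 1} refers.

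Next I would verify the partial-sum hypothesis that there exist $\eta > 0$ and infinitely many $r$ with $|\Psi_r| > \eta$, where $\Psi_r = \sum_{i=0}^r \psi_i$. These partial sums are precisely the partial sums of ${_1F_0}(\alpha, z)$ evaluated at $z = 1$ with $\alpha = H - \tfrac{1}{2}$, so \emph{Lemma 2} is directly applicable. When $H = \tfrac{1}{2}$ the hypothesis is trivial since $\Psi_r \equiv 1$. When $H > \tfrac{1}{2}$ we have $\alpha \in (0, \tfrac{1}{2})$ and \emph{Lemma 2} says the series diverges at $z = 1$; since $\psi_i > 0$ for all $i$ when $\alpha > 0$, the partial sums grow to $+\infty$ and the hypothesis holds for every $\eta > 0$. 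The case $H < \tfrac{1}{2}$ is the delicate one: $\alpha \in (-\tfrac{1}{2}, 0)$ and \emph{Lemma 2} only gives $\Psi_r \to 0$, so one must argue more finely---for instance by using the precise decay rate $\Psi_r \sim r^{H-1/2}/\Gamma(H+\tfrac{1}{2})$ extractable from the hypergeometric identity, or by invoking the ``slight modification'' of the Chou--Gray argument flagged in the footnote to \emph{Lemma 1} (which is ultimately driven by the divergent variance of the accumulated process $B^H(n)$ growing like $n^{2H}$, still $\to \infty$ when $H < \tfrac{1}{2}$).

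With the coefficient condition in hand, \emph{Lemma 1} immediately yields convergence of $e(n)$ in distribution to the uniform law on $[-\tfrac{1}{2}, \tfrac{1}{2}]$, which is the statement of \emph{Theorem 1}. I expect the anti-persistent regime $H < \tfrac{1}{2}$ to be the main obstacle: \emph{Lemma 2} forces the partial sums to tend to zero there, so the literal hypothesis of \emph{Lemma 1} is not met and one must appeal to a refinement---either a sharper form of \emph{Lemma 1} or a quantitative companion to \emph{Lemma 2} controlling how slowly $\Psi_r$ decays.
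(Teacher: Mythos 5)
Your treatment of the cases $H = \tfrac{1}{2}$ and $H \in (\tfrac{1}{2},1)$ matches the paper exactly: identify $\psi_i = h_i^{H-1/2}$, note $\Psi_r \equiv 1$ in the first case and divergence of the positive-coefficient series (via \emph{Lemma 2}) in the second, and apply \emph{Lemma 1}. You have also correctly diagnosed that $H \in (0,\tfrac{1}{2})$ is the case where \emph{Lemma 1} literally fails, since $\Psi_r \to 0$ there, and your two proposed ingredients --- the asymptotic $\Psi_r \sim r^{H-1/2}/\Gamma(H+\tfrac{1}{2})$ and the divergence of the accumulated variance --- are precisely the right ones. The gap is that you stop at ``one must appeal to a refinement'' without constructing it, and that deferral covers a third of the theorem; as written, the anti-persistent case is asserted rather than proved.

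For the record, here is how the paper closes it, and it is essentially the argument you gestured at. Write $e(n) = 1 - \tfrac{1}{2}\langle \theta(n)\rangle$ with $\theta(n) = \sum_{i=0}^{n}\bigl(W^H(i)/\Delta + \tfrac{1}{2}\bigr)$, and regroup the double sum so that $\sum_{i=0}^{n} W^H(i) = \sum_{i=0}^{n} \Psi_i\, W(n-i)$ with $\Psi_i = \sum_{j=0}^{i} h_j^{H-1/2}$. Independence of the $W(\cdot)$ then factors the characteristic function of $\theta(n)$ at integer frequencies $2\pi l$ into a product of Gaussian characteristic functions $\Phi_W\bigl(2\pi \tfrac{l}{\Delta}\Psi_i\bigr) = \exp\bigl(-\tfrac{1}{2}(2\pi l \Psi_i/\Delta)^2\bigr)$. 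Using the quantitative lower bound $\Psi_i \ge i^{-1/2}$ for $H \in (0,\tfrac{1}{2})$ (your asymptotic, in weakened form), the product is dominated by $\exp\bigl[-(2\pi l/\Delta)^2 \sum_i \tfrac{1}{i}\bigr]$, which tends to $0$ by divergence of the harmonic series. Hence $\Phi_{\langle\theta(n)\rangle}(2\pi l) \to 0$ for all $l \ne 0$, which is exactly the Weyl-type criterion for $\langle\theta(n)\rangle$ to converge to $U[0,1]$, and therefore $e(n) \to U[-\tfrac{1}{2},\tfrac{1}{2}]$. In other words, the ``quantitative companion to Lemma 2'' you asked for is the partial-sum identity $\Psi_n = \frac{\Gamma(n+1+H-\frac{1}{2})}{\Gamma(H+\frac{1}{2})\Gamma(n+1)}$, and the ``sharper form of Lemma 1'' is replaced by a direct characteristic-function computation exploiting the explicit Gaussian form of $\Phi_W$; your proposal would be complete once either of these is actually carried out rather than invoked.
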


\begin{proof} We will discuss the following three cases of $0 < H <
\frac{1}{2}$, $H = \frac{1}{2}$, and $\frac{1}{2} < H <1$,
respectively. From Eq.(\ref{equ:3}), we know that the coefficients
of an fGn signal are $\psi_{G,i}=h_{i}^{H-\frac{1}{2}}$.

i) For $H \in (0, \frac{1}{2})$, $\Gamma(H-\frac{1}{2}) < 0$.
Because $0 > H-\frac{1}{2} > - \frac{1}{2}$, $\sum_{i=0}^{\infty}
\psi_{G,i} = \sum_{i=0}^{\infty} \frac{\Gamma(i +
H-\frac{1}{2})}{\Gamma(H-\frac{1}{2})\Gamma(i + 1)} = 0$ by
\textit{Lemma 2}. Thus we cannot directly apply \textit{Lemma 1}
here. However,  the convergence speed of this series satisfies
$\sum_{i=0}^{n} \frac{\Gamma(i +
H-\frac{1}{2})}{\Gamma(H-\frac{1}{2})\Gamma(i+1)} \ge
\frac{1}{\sqrt{n}}$ for $H \in (0, \frac{1}{2})$
\cite{WhittakerWatson1996}. Based on these observations, we will
derive the limit distribution of the quantization error through the
limit of its characteristic function.

As proven in \cite{ChouGray1992}, we can rewrite Eq.(\ref{equ:5}) as
\begin{equation}
\label{equ:8} e(n) = 1 - \frac{1}{2} \langle \theta(n) \rangle
\end{equation}

\noindent where $\theta(n) \triangleq \sum_{i=0}^{n} \left(
\frac{W^H(i)}{\Delta} + \frac{1}{2} \right)$.

The corresponding characteristic function can be written as
\begin{eqnarray}
\label{equ:9}
& \left \vert \Phi_{\langle \theta(n) \rangle} (2 \pi l) \right \vert \nonumber \\
& = \left \vert \textrm{E} \left\{ \exp \left[ 2 \pi
\frac{l}{\Delta}
\left( W^H(n) + ... + W^H(0) \right) \right] \right\} \right \vert \nonumber \\
& = \left \vert \textrm{E} \left\{ \exp \left[ 2 \pi
\frac{l}{\Delta}
\left( \sum_{i=0}^n h_{i}^{H-\frac{1}{2}} W(n-i) + ... + W(0) \right) \right] \right\} \right \vert \nonumber \\
\end{eqnarray}

The innermost sum in Eq.(\ref{equ:9}) can be grouped as
\begin{eqnarray}
\label{equ:10} & \sum_{i=0}^n h_i^{H-1/2} W(n-i) + ... + W(0) \nonumber \\
& = h_{0}^{H-\frac{1}{2}} W(n) + (h_{0}^{H-\frac{1}{2}} +
h_{1}^{H-\frac{1}{2}}) W(n-1) +
\nonumber \\
& ... + (h_{0}^{H-\frac{1}{2}} + ... + h_{n}^{H-\frac{1}{2}}) W(0)
\end{eqnarray}

Hence
\begin{eqnarray}
\label{equ:11} & & \lim_{n \rightarrow \infty} \Phi_{\langle
\theta_n \rangle} (2 \pi l) \nonumber \\
& = & \lim_{n \rightarrow \infty} E \left\{ \prod_{i=0}^{n} \Phi_W
\left( 2 \pi \frac{l}{\Delta} \sum_{j=0}^{i} h_{j}^{H-\frac{1}{2}}
 \right) \right\}
\end{eqnarray}

Notice that the characteristic function of a standard Gaussian
process is $\Phi_W ( \omega ) = \exp(- \frac{1}{2} \omega^2)$,
$\omega \in \mathbb{R}$. We have
\begin{eqnarray}
\label{equ:12} & & \left| \Phi_W \left( 2 \pi \frac{l}{\Delta}
\sum_{j=0}^{i}
h_{j}^{H-\frac{1}{2}} \right) \right| \nonumber \\
& = & \left| \Phi_W \left( 2 \pi \frac{l}{\Delta} \sum_{j=0}^{i}
\frac{\Gamma(H-\frac{1}{2} + j)}{\Gamma(H-\frac{1}{2})\Gamma(j+1)}
\right) \right|
\nonumber \\
& \le & \left| \Phi_W \left( 2 \pi \frac{l}{\Delta}
\frac{1}{\sqrt{i}} \right) \right|
\end{eqnarray}

The harmonic series $\sum_{i=1}^{\infty} \frac{1}{i}$ diverges; in
other words, for any small positive number $\epsilon >
0$, 
we can always find a large enough integer $n^*$ such that
\begin{eqnarray}
\label{equ:13} \sum_{i=1}^{n^*} \frac{1}{i} > - \ln
\left(\epsilon\right) \left( \frac{\Delta}{2 \pi l} \right)^2
\end{eqnarray}

\noindent or equivalently
\begin{eqnarray}
\label{equ:14} \exp \left[ - \left( \frac{2 \pi l}{\Delta} \right)^2
\sum_{i=1}^{n^*} \frac{1}{i} \right] < \epsilon
\end{eqnarray}

From (\ref{equ:11}), (\ref{equ:12}) and (\ref{equ:14}), it follows
that for any small $\epsilon >0$, there exists a large enough
integer $n^*$ such that for all $n > n^*$,
\begin{eqnarray}
\label{equ:15} \left| \Phi_{\langle \theta_n \rangle} (2 \pi l)
\right| & \le & \left| E \left\{ \prod_{i=0}^{n-1} \Phi_W \left( 2
\pi \frac{l}{\Delta} \frac{1}{\sqrt{i}}
 \right) \right\} \right| \nonumber \\
& = & \left| E \left\{ \exp \left[ - \left( 2 \pi \frac{l}{\Delta}
\right)^2 \sum_{i=0}^{n} \frac{1}{i} \right] \right\} \right| \nonumber \\
& < & \epsilon
\end{eqnarray}

\noindent which means
\begin{eqnarray}
\label{equ:16} \lim_{n \rightarrow \infty} \Phi_{\langle \theta_n
\rangle} (2 \pi l) = \left\{ \begin{array}{ll}
1 &, l = 0 \\
0 &, l \ne 0
\end{array} \right.
\end{eqnarray}

Therefore, the distribution of $\langle \theta(n) \rangle$ converges
to the uniform distribution in $[0, 1]$ and the limit distribution
of $e(n)$ is $U[-\frac{1}{2}, \frac{1}{2}]$.

ii) For $H = \frac{1}{2}$, we directly have $\sum_{i=0}^{\infty}
\psi_{G,i} = 1 \ne 0$, so \textit{Lemma 1} applies.

iii) For $H \in (\frac{1}{2}, 1)$, $\sum_{i=0}^{\infty} \psi_{G,i} =
\sum_{i=0}^{\infty} \frac{\Gamma(i +
H-\frac{1}{2})}{\Gamma(H-\frac{1}{2})\Gamma(i + 1)}$ diverges by
\textit{Lemma 2}. It then follows from the definition of divergence
that the premise of \textit{Lemma 1} are satisfied.

\end{proof}

\subsection{fBm Signals}
\label{sec:3.2}

To analyze the quantization error of fBm processes, we need another
lemma from \cite{ChouGray1992}.

\indent

\begin{lemma} \cite{ChouGray1992} Define an AR(1) process $x(n)$ as
\begin{equation}
\label{equ:17} x(n) - x(n-1) = z(n)
\end{equation}

If the input $z(n)$ is a stationary independent increments, the
normalized quantization noise $e(n)$ of the modulo sigma-delta
modulation converges to the uniform distribution in $[-\frac{1}{2},
\frac{1}{2}]$ and has a white spectrum under assumption of high
resolution.
\end{lemma}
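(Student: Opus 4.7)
The plan is to specialize the characteristic-function argument used in case (i) of \textit{Theorem 1} to the independent-increment structure, where the combinatorics simplify considerably. First, I would unfold the AR(1) recursion to write $x(n) = x(0) + \sum_{k=1}^{n} z(k)$. Substituting into the cumulative sum inside $\theta(n)$ in Eq.~(\ref{equ:5}) produces a triangular linear combination,
\[
\sum_{i=0}^{n} x(i) = (n+1)\,x(0) + \sum_{k=1}^{n} (n-k+1)\,z(k),
\]
whose terms (beyond the $x(0)$ piece) are built from the independent stationary increments $z(k)$.

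Next, by the Weyl equidistribution criterion, convergence of $\langle \theta(n) \rangle$ to the uniform distribution on $[0,1]$ is equivalent to $|\Phi_{\theta(n)}(2\pi l)| \to 0$ for every non-zero integer $l$. Stationarity and independence of the $z(k)$ factor this characteristic function into a product,
\[
|\Phi_{\theta(n)}(2\pi l)| = |\Phi_{x(0)}(\cdot)| \prod_{k=1}^{n} \left| \Phi_{z}\!\left(2\pi \frac{l(M-1)(n-k+1)}{\Delta}\right) \right|,
\]
so the analysis reduces to bounding this product. Under the smooth-density hypothesis, the Riemann-Lebesgue lemma gives $|\Phi_z(\omega)| \to 0$ as $|\omega| \to \infty$ and $|\Phi_z(\omega)| < 1$ off the origin. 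Since the arithmetic progression of arguments ranges up to magnitudes of order $n/\Delta$, a positive fraction of the factors has modulus bounded away from $1$, forcing geometric decay of the product.

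For the whiteness of the spectrum I would compute the joint characteristic function of $(\langle \theta(n_1) \rangle, \ldots, \langle \theta(n_p) \rangle)$ at distinct times $n_1 < \cdots < n_p$ and verify factorization into marginals. Because the increments between successive times are fresh, independent blocks of $z(k)$'s, the same triangular decomposition isolates a Riemann-Lebesgue-small product for every non-zero frequency multi-index. Passing to the high-resolution limit then yields joint independence of the $e(n_j)$, hence a flat cross-spectrum.

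The main obstacle is quantitative: Riemann-Lebesgue furnishes only pointwise decay of $|\Phi_z|$, so one must convert this into a product bound uniform in $n$ along the arithmetic progression of arguments. This is where smoothness of the density really enters, ruling out pathological distributions whose characteristic function attains modulus one on a non-trivial additive subgroup of $\mathbb{R}$. The difficulty is analogous to the role played by the diverging harmonic-series estimate in Eqs.~(\ref{equ:13})--(\ref{equ:14}) in the proof of \textit{Theorem 1}; here, however, the arguments grow linearly in $n$ rather than decaying like $1/\sqrt{i}$, so the decay should be easier to harvest.
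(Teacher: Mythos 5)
The paper never actually proves this lemma: it is imported verbatim from \cite{ChouGray1992} and used as a black box in part i) of \textit{Theorem 2}. So the only meaningful comparison is with the characteristic-function arguments the paper does carry out (\textit{Theorem 1}, case i, and \textit{Theorem 2}, case ii), and your sketch is exactly that technique specialized to the independent-increment setting: unfold the recursion, obtain the triangular coefficients $n-k+1$, factor the characteristic function of $\theta(n)$ over the independent $z(k)$, and invoke the Weyl criterion. The marginal part of your argument is sound, and in fact simpler than you suggest: since $z$ has a density, $|\Phi_z|<1$ off the origin together with Riemann--Lebesgue gives $\sup_{|\omega|\ge 2\pi|l|/\Delta}|\Phi_z(\omega)|=q<1$, so \emph{every} one of the $n$ nontrivial factors is $\le q$ and the product decays geometrically; no appeal to ``a positive fraction'' of the arguments is needed. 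This is the same mechanism as (\ref{equ:13})--(\ref{equ:15}), only easier because the arguments grow linearly rather than decaying like $1/\sqrt{i}$, as you observe. (You must, however, read the garbled hypothesis as ``$z(n)$ i.i.d.\ with smooth density,'' i.e.\ $x(n)$ has stationary independent increments; otherwise the product factorization you rely on is unavailable.)

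The step that is genuinely under-argued is whiteness. To show $(e(n),e(n+k))$ becomes jointly uniform you must kill the joint Fourier coefficient at every integer pair $(l_1,l_2)\ne(0,0)$, and the ``fresh independent block'' between $n$ and $n+k$ cannot do this by itself: for fixed lag $k$ it contributes only $k$ factors, so its modulus is bounded below by $q^{k}>0$ uniformly in $n$. The dangerous multi-indices are those with $l_1+l_2=0$, $l_2\ne 0$: there the coefficient of every \emph{shared} increment $z(m)$, $m\le n$, equals the constant $2\pi l_2 k/\Delta\ne 0$, and it is the $n$ shared factors $|\Phi_z(2\pi l_2 k/\Delta)|<1$, not the fresh ones, that drive the joint characteristic function to zero; when $l_1+l_2\ne 0$ the shared arguments grow linearly and the marginal argument recycles. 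With that case split supplied your outline closes. Note also that the correct conclusion is not that the joint characteristic function ``factors into marginals'' at finite $n$, but that all nonzero joint Fourier coefficients vanish in the limit, which identifies the limiting joint law as the product of uniforms and yields (\ref{equ:24}).
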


\indent

\textit{Lemma 3} directly applies to the quantization error of fBm
processes with $H = \frac{1}{2}$ (indeed, the Brownian motion). For
$H \ne 1/2$ the techniques used for proving this lemma in
\cite{ChouGray1992} can also be adopted to characterize the
quantization error.

\indent

\begin{theorem} The quantization noise with uniform quantizer of
fBm process is asymptotically uniformly distributed and white under
the assumption of high resolution.
\end{theorem}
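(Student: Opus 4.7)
The plan is to follow the template of Theorem~1, extended to handle the joint distribution needed for whiteness. The case $H=\frac{1}{2}$ reduces directly to Lemma~3, since $B^{1/2}(n)-B^{1/2}(n-1)=W(n)$ is a stationary independent-increments input to the AR(1) recursion (\ref{equ:17}). For $H\ne\frac{1}{2}$ one has to repeat the characteristic-function analysis of Theorem~1 with the new set of driving coefficients, and then promote the one-time-index computation to a two-time-index computation.

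Concretely, I would substitute (\ref{equ:2}) into the sigma--delta representation (\ref{equ:5})--(\ref{equ:8}), interchange the order of summation, and write $\theta(n)$ (up to a deterministic additive constant) as $\sum_{j=0}^{n} a_{j}^{(n)} W(j)$ with
\[ a_{j}^{(n)} \;=\; \frac{1}{\Delta}\sum_{k=j}^{n}\sum_{m=0}^{k-j} h_{m}^{H-1/2}. \]
The inner partial sums $\sum_{m=0}^{k-j} h_{m}^{H-1/2}$ either blow up (for $H>\frac{1}{2}$, by the divergence half of Lemma~2) or decay no faster than $(k-j)^{-1/2}$ (for $H<\frac{1}{2}$, as already exploited in the proof of Theorem~1). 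In either regime $|a_{j}^{(n)}|$ grows polynomially in $n-j$, so $\sum_{j=0}^{n}(a_{j}^{(n)})^{2}\to\infty$. Inserting the Gaussian form $\Phi_{W}(\omega)=\exp(-\omega^{2}/2)$ exactly as in (\ref{equ:15}) then gives $|\Phi_{\langle\theta(n)\rangle}(2\pi l)|\to 0$ for every integer $l\ne 0$, so by Weyl's criterion $\langle\theta(n)\rangle$ equidistributes on $[0,1]$ and $e(n)$ converges in distribution to $U[-\frac{1}{2},\frac{1}{2}]$.

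For the whiteness claim, I would run the same argument on the \emph{joint} characteristic function of $(\langle\theta(n)\rangle,\langle\theta(n+k)\rangle)$ evaluated at integer frequencies $(2\pi l_{1},2\pi l_{2})$. By linearity this reduces to the characteristic function of $\sum_{j}(l_{1}a_{j}^{(n)}+l_{2}a_{j}^{(n+k)})W(j)$, which is a centered Gaussian with log-magnitude $-\frac{1}{2}(2\pi)^{2}\sum_{j}(l_{1}a_{j}^{(n)}+l_{2}a_{j}^{(n+k)})^{2}$. Granted this variance diverges for every fixed $(l_{1},l_{2})\ne(0,0)$ and every fixed lag $k$, the pair $(\langle\theta(n)\rangle,\langle\theta(n+k)\rangle)$ is asymptotically uniform on $[0,1]^{2}$, its two components become asymptotically independent, and hence $\textrm{E}[e(n)e(n+k)]\to 0$ for every fixed $k\ne 0$, which is the desired asymptotic whiteness.

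The hard part is precisely this non-cancellation estimate. The two sequences $\{a_{j}^{(n)}\}$ and $\{a_{j}^{(n+k)}\}$ differ only by the contributions of the $k$ additional outer indices, so for $j$ much smaller than $n$ they agree to leading order, and a naive bound would allow the linear combination $l_{1}a_{j}^{(n)}+l_{2}a_{j}^{(n+k)}$ to be nearly zero over a long range of $j$. I expect this to be resolved by a Stirling-type expansion of $a_{j}^{(n)}$ in the variable $n-j$, whose leading polynomial term depends on $n$ in a way that no fixed nonzero pair $(l_{1},l_{2})$ can annihilate, forcing the variance sum to diverge. Once this estimate is in hand, the argument closes in the same modulo-$1$ framework already used for Theorem~1 and Lemma~3.
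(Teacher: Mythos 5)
Your strategy is sound and in places more explicit than the paper's own argument, but it takes a genuinely different route for the distributional part. For $H\ne\frac12$ the paper does not redo the characteristic-function computation: it reads off the MA coefficients $\psi_{B,i}$ of the fBm signal, lower-bounds the partial sums $\vert\psi_{B,0}+\cdots+\psi_{B,r}\vert$ away from zero (via part iii) of Theorem~1 when $H>\frac12$, and via the sign pattern $h_0^{H-1/2}>0$, $h_i^{H-1/2}<0$ for $i>0$ when $H<\frac12$), and then invokes Lemma~1. Your direct computation of the coefficients $a_j^{(n)}$ of $W(j)$ in the accumulated variable (the paper's $\delta(n)$) and the divergence of $\sum_j (a_j^{(n)})^2$ is essentially Lemma~1 unrolled, since those $a_j^{(n)}$ are exactly the partial sums of the $\psi_{B,i}$; your version treats both regimes of $H$ uniformly at the cost of reproving what the lemma packages. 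Both routes reach $U[-\frac12,\frac12]$ for the marginal law.

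The substantive issue is whiteness, and you have correctly isolated the step that matters: the two-dimensional Weyl criterion needs $\sum_j (l_1 a_j^{(n)}+l_2 a_j^{(n+k)})^2\to\infty$ for every $(l_1,l_2)\ne(0,0)$, and you leave this as a conjecture (``granted this variance diverges''). The paper is no better here --- it simply asserts that $(\langle\delta(n)\rangle,\langle\delta(n+k)\rangle)$ converges to the uniform law on the unit square and passes to the correlation integral --- but your proposal as written still has a gap at this point. It closes more easily than a Stirling expansion: write $S_M=\sum_{m=0}^{M}h_m^{H-1/2}$, so that $a_j^{(n)}=\frac{1}{\Delta}\sum_{i=j}^{n}S_{i-j}$ and $a_j^{(n+k)}-a_j^{(n)}=\frac{1}{\Delta}\sum_{i=n+1}^{n+k}S_{i-j}$. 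If $l_1+l_2\ne 0$, then $l_1a_j^{(n)}+l_2a_j^{(n+k)}=(l_1+l_2)a_j^{(n)}+l_2\bigl(a_j^{(n+k)}-a_j^{(n)}\bigr)$, and the second term involves only $k$ summands $S_{i-j}$ against the $n-j$ summands of the first, so it is of strictly lower order and divergence follows from the marginal case. If $l_1+l_2=0$, the leading parts cancel exactly and you are left with $l_1^2\sum_j\bigl(a_j^{(n+k)}-a_j^{(n)}\bigr)^2$; since $S_M\ge M^{-1/2}>0$ for $H\in(0,\frac12)$ (the bound already used in Theorem~1) and $S_M\ge 1$ for $H\in(\frac12,1)$, each difference is at least of order $k(n+k-j)^{-1/2}$, and the sum over $j$ diverges at least like $\log n$. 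With this estimate supplied, your argument is complete and in fact proves the joint equidistribution that the paper only asserts.
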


\begin{proof}

i) For $H = \frac{1}{2}$, we have
\begin{equation}
\label{equ:18} B^H (n) - B^H (n-1) = W^H(n)
\end{equation}

\noindent which follows directly from \textit{Lemma 3}.

ii) For $H \in (0, \frac{1}{2}) \cup (\frac{1}{2}, 1)$, we will
derive the limit distribution using the characteristic function. We
can define
\begin{equation}
\label{equ:19} e(n) = 1 - \frac{1}{2} \langle \delta(n) \rangle
\end{equation}

\noindent where $\delta(n) \triangleq \sum_{i=0}^{n} \left(
\frac{B^H(i)}{\Delta} + \frac{1}{2} \right)$.

From Eq.(\ref{equ:1}) and Eq.(\ref{equ:2}), we know the regression
coefficients of an fBm signal are $\psi_{B,i} = \sum_{j=0}^i (i-j)
h_j^{H-\frac{1}{2}}$. Hence in the case of $H \in (\frac{1}{2}, 1)$,
there exists $\eta > 0$ and infinitely many values of $r$ such that
\begin{eqnarray}
\label{equ:20} & \left \vert \psi_{B,0} + \cdots + \psi_{B,r} \right
\vert = \left \vert rh_0^{H-\frac{1}{2}} + \cdots
h_r^{H-\frac{1}{2}}\right \vert \nonumber \\
& > \left \vert h_0^{H-\frac{1}{2}} + \cdots
h_r^{H-\frac{1}{2}}\right \vert > \eta
\end{eqnarray}

\noindent where the last inequality follows from part iii) of the
proof of \textit{Theorem 1}. Now \textit{Lemma 1} applies and the
distribution of $e(n)$ converges to $U[-\frac{1}{2}, \frac{1}{2}]$.

Similarly, we can prove the case of $H \in (0, \frac{1}{2})$.
Because $h_0^{H-\frac{1}{2}} > 0$, $h_i^{H-\frac{1}{2}} < 0$ for $i
> 0$, we have
\begin{eqnarray}
\label{equ:21} r h_{0}^{H-\frac{1}{2}} + \cdots +
h_{r}^{H-\frac{1}{2}} > 0
\end{eqnarray}

\noindent for $r \in \mathbb{N}$ and thus
\begin{eqnarray}
\label{equ:22} & \lim_{r->\infty} | \psi_{B,0} + ... + \psi_{B,r}|
\nonumber \\
& = \lim_{r \rightarrow \infty} \left \vert
r h_{0}^{H-\frac{1}{2}} + (r-1) h_{1}^{H-\frac{1}{2}} + \cdots + h_{r}^{H-\frac{1}{2}} \right \vert \nonumber \\
& > \lim_{r \rightarrow \infty} \left \vert h_{0}^{H-\frac{1}{2}} +
... + h_{r}^{H-\frac{1}{2}} \right \vert = 0
\end{eqnarray}

Again \textit{Lemma 1} applies.

\indent

From Eq.(\ref{equ:19}), the limit correlation
$\bar{\textrm{R}}\left((e(n), e(n+k)\right)$ can be written as
\begin{eqnarray}
\label{equ:23} & \bar{\textrm{R}} \left((e(n), e(n+k)\right) \nonumber \\
& = \frac{1}{4} - \bar{\textrm{E}} \left( \langle \delta(n) \rangle
\right) + \bar{\textrm{E}} \left \{ \langle \delta(n) \rangle
\langle \delta(n+k) \rangle \right \}
\end{eqnarray}

\noindent where the limit mean $\bar{\textrm{E}} \left( x(n) \right)
= \lim_{N \rightarrow \infty} \frac{1}{N} \sum_{n=1}^N \textrm{E}
\left( x(n) \right)$.

For $k \ne 0$, $( \langle \delta(n) \rangle, \langle \delta(n+k)
\rangle)$ converges in distribution to a random variable which is
uniformly distributed in $[0,1) \times [0, 1)$; thus we have
$\bar{\textrm{E}} \left( \langle \delta(n) \rangle, \langle
\delta(n+k) \right) = \int_0^1 \int_0^1 u v du dv = \frac{1}{4}$.

For $k = 0$, we have $\bar{\textrm{E}} \left( \langle \delta(n)
\rangle^2 \right)  = \frac{1}{3}$ and $\bar{\textrm{R}} \left(
\langle \delta(n) \langle \delta(n) \right) = \frac{1}{3}$.

By (\ref{equ:23}), we have
\begin{equation}
\label{equ:24} \bar{\textrm{R}} \left((e(n), e(n+k)\right) = \left\{
\begin{array}{ll}
\frac{1}{12} &, k = 0 \\
0 &, k \ne 0
\end{array} \right.
\end{equation}

Thus, the normalized quantization error is white and is
asymptotically uncorrelated with the output of the quantized signal.
Therefore, we prove the whole conclusion.

\end{proof}

\section{Some Simulation Results}
\label{sec:4}

Fig.1 shows a typical Power Spectral Density (PSD) for the
quantization error of a 1D quantized fBm signal, which indicates
that the normalized quantization error is indeed white.

\begin{figure}[h]
\label{fig:1} \centering
\includegraphics[width=2.5in]{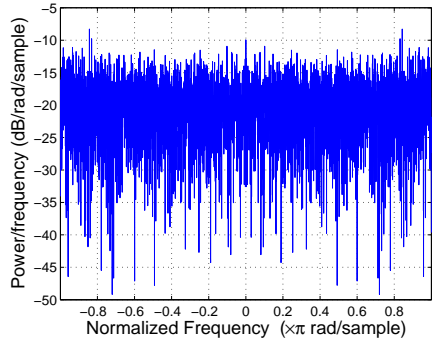}
\caption{The PSD estimate for the quantization error of a 1D
quantized fBm signal, where $H = 0.2$, the quantization scale is
$\Delta = 1$. The PSD is estimated via periodogram method.}
\end{figure}

Fig.2 shows the PCA eigen-spectrum of the original fBm signal, the
quantized fBm signal and the quantization error. According to the
results given in
\cite{BaykutErolAkgul2006}-\cite{LiHuChenZhang2009}, when the
sampling data length $K$ is a sufficiently large constant, the PCA
eigenvalue spectrum of the auto-correlation of a 1D fBm process with
Hurst exponent $H$ decays as a power-law
\begin{equation}
\label{equ:25} \tilde{\lambda}_k \sim k^{-(2H+1)}, \indent k=1, ...,
K
\end{equation}

It is also proven in
\cite{BaykutErolAkgul2006}-\cite{LiHuChenZhang2009} that the
numerical eigen-spectrum of a white noise should be a straight line
with slope $\alpha_0 \approx 0$ in the log-log scale (the slope is
not strictly $0$ because of the finite sampling length effect).
Moreover, when the 1D fBm signal is corrupted with additive white
noise and the SNR is large enough, the eigenvalue spectrum of the
corrupted signal crossovers from $\alpha_1 = -(2H+1)$ to $\alpha_0$.
Comparing Fig.2 to the simulation results provided in
\cite{BaykutErolAkgul2006}-\cite{LiHuChenZhang2009}, we can see that
under high resolution, the quantization error behaves exactly like a
certain additive white noise.

\begin{figure}[h]
\label{fig:2} \centering
\includegraphics[width=2.5in]{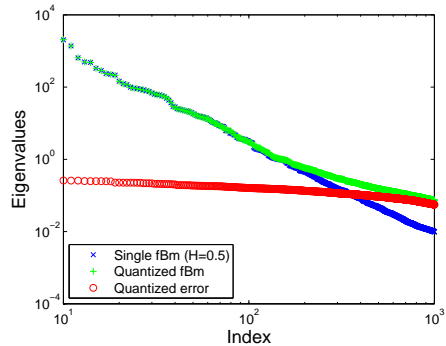}
\caption{PCA eigen-spectrum of the auto-correlation of a standard
fBm signal, its corresponding quantized signal, and the quantization
error, where $H = 0.8$, and the quantization scale is $\Delta = 1$.}
\end{figure}


\end{document}